\theoremstyle{plain}
\newtheorem{prop}{Proposition}
\theoremstyle{definition}
\theoremstyle{remark}
\newcommand{\PP}{\mathbb{P}}
\newcommand{\X}{\mathcal{X}}
\newcommand{\E}{\mathsf{E}}
\renewcommand{\phi}{\varphi}
\newcommand{\unif}{\mathsf{Unif}}
\newcommand{\Fhat}{\widehat{F}}
\title{Fast nonparametric near-maximum likelihood estimation of a mixing density}
\author{Minwoo Chae,\footnote{Department of Mathematics, Applied Mathematics, and Statistics, Case Western University, {\tt minwooo.chae@gmail.com}} \quad Ryan Martin,\footnote{Department of Statistics, North Carolina State University, {\tt rgmarti3@ncsu.edu}} \quad and \quad Stephen G.~Walker\footnote{Department of Mathematics, University of Texas at Austin, {\tt s.g.walker@math.utexas.edu}}}
\date{\today}
\begin{document}

\maketitle 

%\doublspacing

\begin{abstract}
Mixture models are regularly used in density estimation applications, but the problem of estimating the mixing distribution remains a challenge.  Nonparametric maximum likelihood produce estimates of the mixing distribution that are discrete, and these may be hard to interpret when the true mixing distribution is believed to have a smooth density.  In this paper, we investigate an algorithm that produces a sequence of smooth estimates that has been conjectured to converge to the nonparametric maximum likelihood estimator.  Here we give a rigorous proof of this conjecture, and propose a new data-driven stopping rule that produces smooth near-maximum likelihood estimates of the mixing density, and simulations demonstrate the quality empirical performance of this estimator.  

\smallskip

\emph{Keywords and phrases:} Bayesian update; deconvolution; mixture model; predictive recursion; smoothing.

\end{abstract}

\section{Introduction}
\label{S:intro}

Consider a mixture model with density function $f=f_P$ given by 
\begin{equation}
\label{eq:mixture}
f(y) = \int k(y \mid x) \, P(dx)
\end{equation}
where $k(y \mid x)$ is a known kernel density and $P$ is an unknown mixing distribution.  The goal is estimation of $P$ based on independent and identically distributed data $Y_1,\ldots,Y_n$ from the mixture $f$ in \eqref{eq:mixture}, a classically challenging problem in statistics.  If $P$ is a discrete distribution with fixed and finite number of components, then \eqref{eq:mixture} is a finite mixture model and is relatively straightforward; indeed, maximum likelihood computation is feasible with the EM algorithm \citep{dlr} and the usual asymptotic theory is available \citep{rednerwalker}.  The catch is that the number of mixture components can be difficult to specify.  Therefore, there has also been a lot of work on finite mixtures with an unknown number of components \citep[e.g.,][]{miller2016mixture, rousseau2011asymptotic, leroux, james, woosriram2006, richardson, roederwasserman}.  

Likelihood-based methods for estimating $P$ are available even without explicitly making the problem finite-dimensional.  Indeed, the likelihood function for $P$ in the nonparametric case is 
\begin{equation}
\label{eq:likelihood}
L(P) = \prod_{i=1}^n \int k(Y_i \mid x) \, P(dx), 
\end{equation}
and it is known that the maximizer $\hat P$, the nonparametric maximum likelihood estimator (MLE), is discrete, with at most $n$ components \citep{lindsay1983, lindsay1995}.  Discreteness simplifies computation, and fast algorithms are available, e.g., \citet{wang} and \citet{koenker.mizera.2014}.  However, if $P$ is believed to have a density with respect to, say, the Lebesgue measure, then the discrete estimator may not be satisfactory.  To remedy this, various smoothed versions of $\hat P$ have been proposed \citep[e.g.,][]{eggermont1995, eggermont1997, silverman1990, green1990, laird.louis.1991, liulevinezhu2009}, but some of these are rather complicated and there seems to be no general consensus that one smoothing method is any better than another.  

For Bayesian mixture models, the Dirichlet process prior \citep{ferguson1973} and variants of its stick-breaking representation \citep{sethuraman1994} have become a mainstay, largely because of the plethora of powerful Markov chain Monte Carlo methods available for evaluating the corresponding posterior \citep[e.g.,][]{escobar.west.1995, maceachern1998, dunsonpark2007, walker2007.slice, kalli.griffin.walker.2011}.  The focus of these developments, however, has been the mixture density, with the mixing distribution serving merely as a modeling tool; but see \citet{nguyen2013}.  As with the nonparametric MLE, the inherent discreteness of stick-breaking priors, while advantageous for mixture density estimation and modeling latent structures, is problematic in the context of nonparametric Bayesian estimation of a mixing density.    

A Bayesian-style recursive estimate for $P$, called \emph{predictive recursion}, was proposed by \citet{newton02} and studied theoretically by \citet{martinghosh}, \citet{tmg}, and \citet{mt-rate}.  The algorithm is fast and provides an estimator having a smooth density with respect to any specified dominating measure.  However, its dependence on the (arbitrary) order in which the data $Y_1,\ldots,Y_n$ is processed, hence it is not a Bayesian estimator, along with its inability to be characterized as an optimizer of any objective function, makes the predictive recursion estimator difficult to interpret.  

%\footnote{\color{red} As an aside, several years ago I jotted down something on paper (which by now I've lost) with an idea on how to show that the PR estimate, averaged over all data permutations, is exactly the DP mixture prior Bayes estimator.  Maybe this is worth digging into further...}

In this paper, we investigate properties of a simple and fast iterative algorithm for estimating $p$, one that shares certain features with the MLE, a Bayesian approach, as well as predictive recursion.  Versions of this algorithm have been presented in the literature before, and its convergence properties have been conjectured but not rigorously proved.  Here we fill this gap by providing a proof that algorithm converges to the nonparametric MLE as the number of iterations approaches infinity.  While the limit is a discrete distribution, it is interesting that at every finite number of iterations, the algorithm returns a continuous density.  This suggests that a smooth {\em near-MLE} of the density can be obtained by stopping the algorithm before convergence is achieved, and we propose an data-driven stopping rule and demonstrate empirically the quality performance of this nonparametric near-MLE of the mixing density compared to predictive recursion.

\section{A simple and fast algorithm}
\label{S:algorithm}

\subsection{Definition}
\label{SS:definition}

As discussed above, our focus is on estimating a smooth mixing density $p$ associated with the mixing distribution $P$ in \eqref{eq:mixture}; throughout, we assume that $p$ is a density with respect to Lebesgue measure, though other choices of dominating measure could be handled similarly.  Given a prior guess $p_0$ of $p$, if a data point $Y_i$ is observed, then the Bayesian update of $p_0$ to $p_{1,i}$, say, is 
\begin{equation}
\label{eq:single.obs}
p_{1,i}(x) = \frac{k(Y_i \mid x) p_0(x)}{f_0(Y_i)}, 
\end{equation}
where $f_0(y) = \int k(y \mid x) p_0(x) \,dx$.  However, we can carry out this single-observation update for any $i=1,\ldots,n$ and, since observations ordering is irrelevant, it is reasonable to take an average: 
\[ p_1(x) = \frac1n \sum_{i=1}^n p_{1,i}(x) = \frac1n \sum_{i=1}^n \frac{k(Y_i \mid x) p_0(x)}{f_0(Y_i)}. \]
This same argument can be applied, with $p_0$ replaced by $p_1$, to get an updated estimate $p_2$, and so on.  This suggests the following iterative algorithm for an estimator of $p$:
\begin{equation}
\label{eq:algorithm}
p_{t+1}(x) = \frac1n \sum_{i=1}^n \frac{k(Y_i \mid x) p_t(x)}{f_t(Y_i)}, \quad t \geq 0, 
\end{equation}
where $f_t(y) = \int k(y \mid x) p_t(x) \,dx$ for general $t \geq 0$.  Algorithms similar to \eqref{eq:algorithm} for certain applications or models have appeared in the literature; see, e.g., \citet{vardi.shepp.kaufman.1985}, \citet{laird.louis.1991}, and \citet{vardi.lee.1993}.  But despite the hints in these papers about more general versions, it seems that the algorithm \eqref{eq:algorithm} has not been studied thoroughly and in the level of generality considered here.  

Aside from this Bayesian-motivated formulation, there are number of ways to think about this algorithm and understand what it is trying to do.  First, note the similarities with the predictive recursion algorithm of \citet{newton02} which updates by taking a weighted average of the current guess and the single-observation Bayes update \eqref{eq:single.obs} based on the current guess as the prior.  These computations proceed along the sequence $i=1,\ldots,n$ and, therefore, the result depends on the arbitrary order of the data sequence.  The proposed algorithm \eqref{eq:algorithm} can, therefore, be viewed as an order-invariant version of predictive recursion that can also be refined {\em ad infinitum}, by taking $t \to \infty$, if desired.  

Second, suppose that $p_t$ converges, in some sense, to a limit $p_\infty$.  Then that limit, in particular, its corresponding mixture $f_\infty$, must satisfy 
\[ \frac1n \sum_{i=1}^n \frac{k(Y_i \mid x)}{f_\infty(Y_i)} = 1 \quad \forall \; x. \]
This condition boils down to one involving the directional derivative of the log-likelihood function $\ell(P) = \log L(P)$, for $L$ as in \eqref{eq:likelihood}, which, according to Theorem~19 in \citet{lindsay1995}, characterizes the nonparametric MLE.  Therefore, we can expect that, if the limit exists, then it must be the nonparametric MLE, and details supporting this claim are presented in Section~\ref{SS:properties}.  However, for all finite $t \geq 0$, the continuity of the initial guess $p_0$ persists, so stopping at some $T < \infty$ makes $p_T$ a sort of ``smoothed'' nonparametric MLE; see Section~\ref{S:smoothed}.

\subsection{Illustration}
\label{SS:poisson}

Example~1.2 in \citet{bohning} presents data on the number of illness spells for 602 pre-school children in Thailand over a two-week period; see, also, \citet[][Table~1]{wang}.  B\"ohning argues that a Poisson model provides poor fit, so he proposes a nonparametric Poisson mixture model instead.  Wang produces the nonparametric MLE for the mixing distribution displayed in the first five panels Figure~\ref{fig:thai} (vertical gray lines).  These panels also show the mixing density estimates $p_T$ from \eqref{eq:algorithm} for five different stopping times, namely, $T \in \{5, 10, 100, 500, 5000\}$, based on $p_0 = \unif(0,25)$.  It is interesting that the estimate very quickly forgets the shape of the uniform initial guess, that at 500 iterations it has mostly identified the locations of the nonparametric MLE, but that it takes many more iterations before it starts to clearly concentrate around those locations.  Most interesting, however, is that the log-likelihood trajectory $t \mapsto \ell(p_t)$ is non-decreasing and very quickly jumps up to near $\max_P \ell(P)$, the maximum possible value; in fact, the relative difference between $\ell_n(p_{10})$ and the maximum is about 0.003, close to satisfying any reasonable convergence criterion.  So, at least in the sense of the likelihood value, $p_{10}$ is virtually as good as $\hat p$, and can be computed very quickly and simply; plus, $p_{10}$ is also a continuous density.  

\begin{figure}
\begin{center}
\subfigure[$T=5$ iterations]{\scalebox{0.6}{\includegraphics{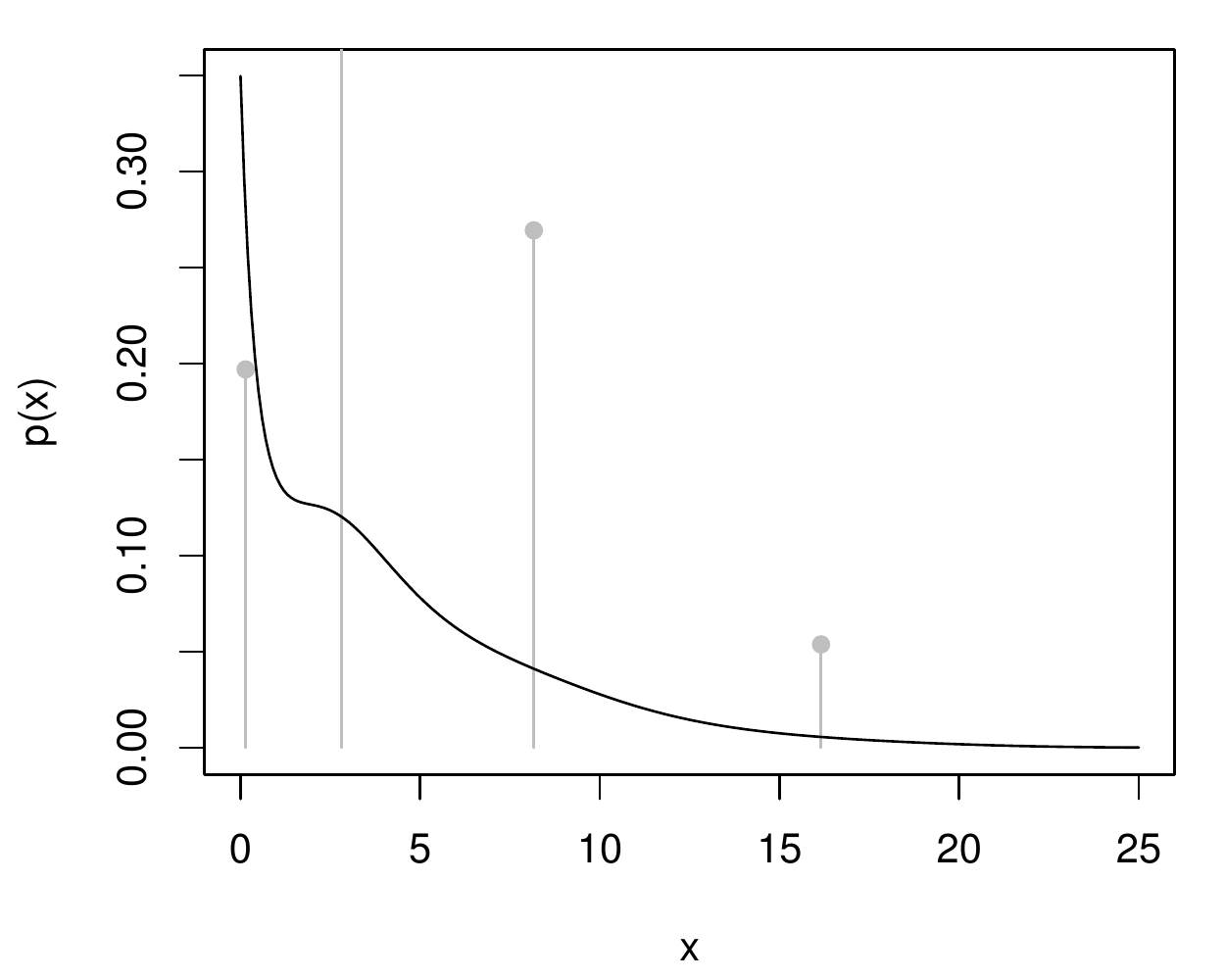}}}
\subfigure[$T=10$ iterations]{\scalebox{0.6}{\includegraphics{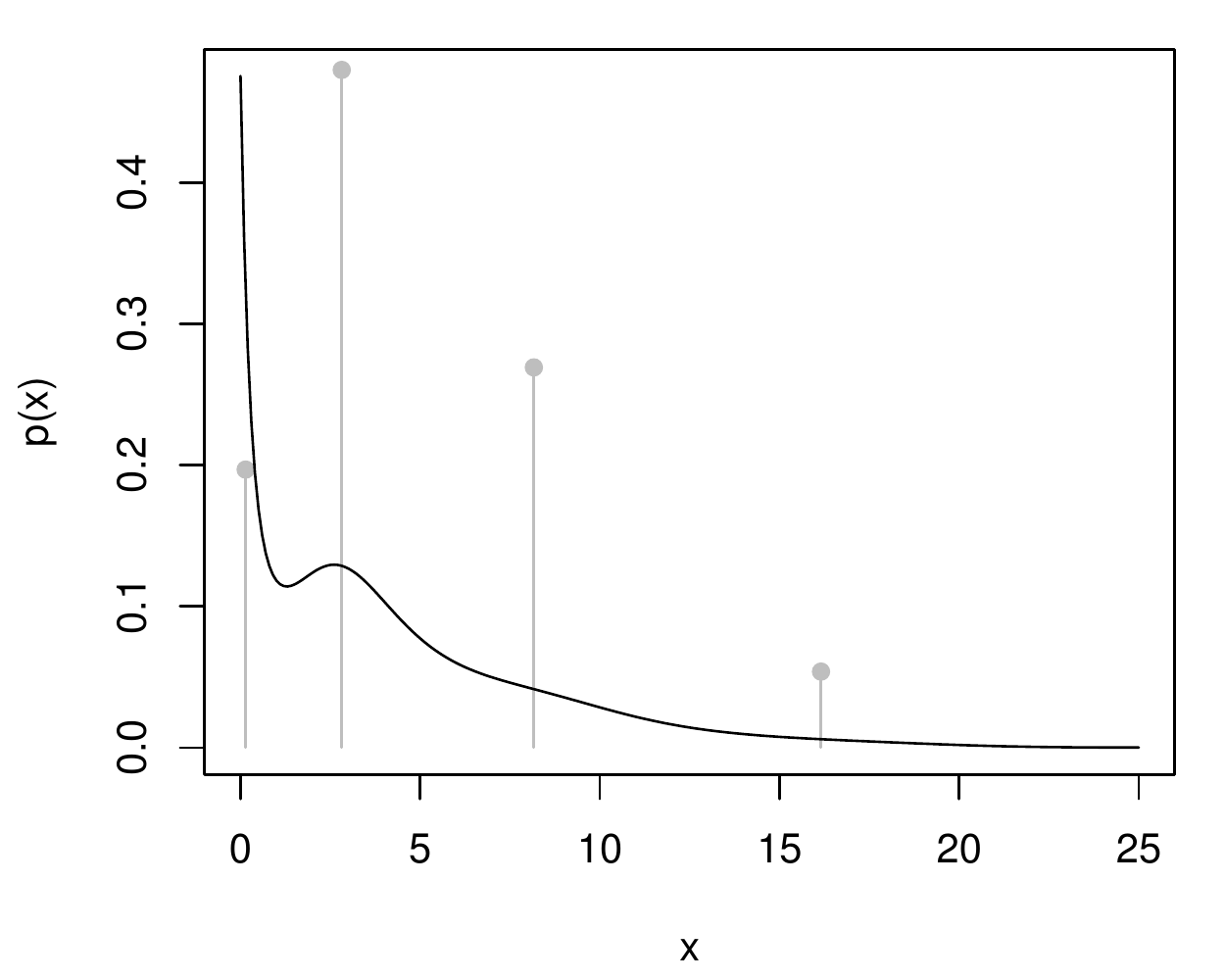}}}
\subfigure[$T=100$ iterations]{\scalebox{0.6}{\includegraphics{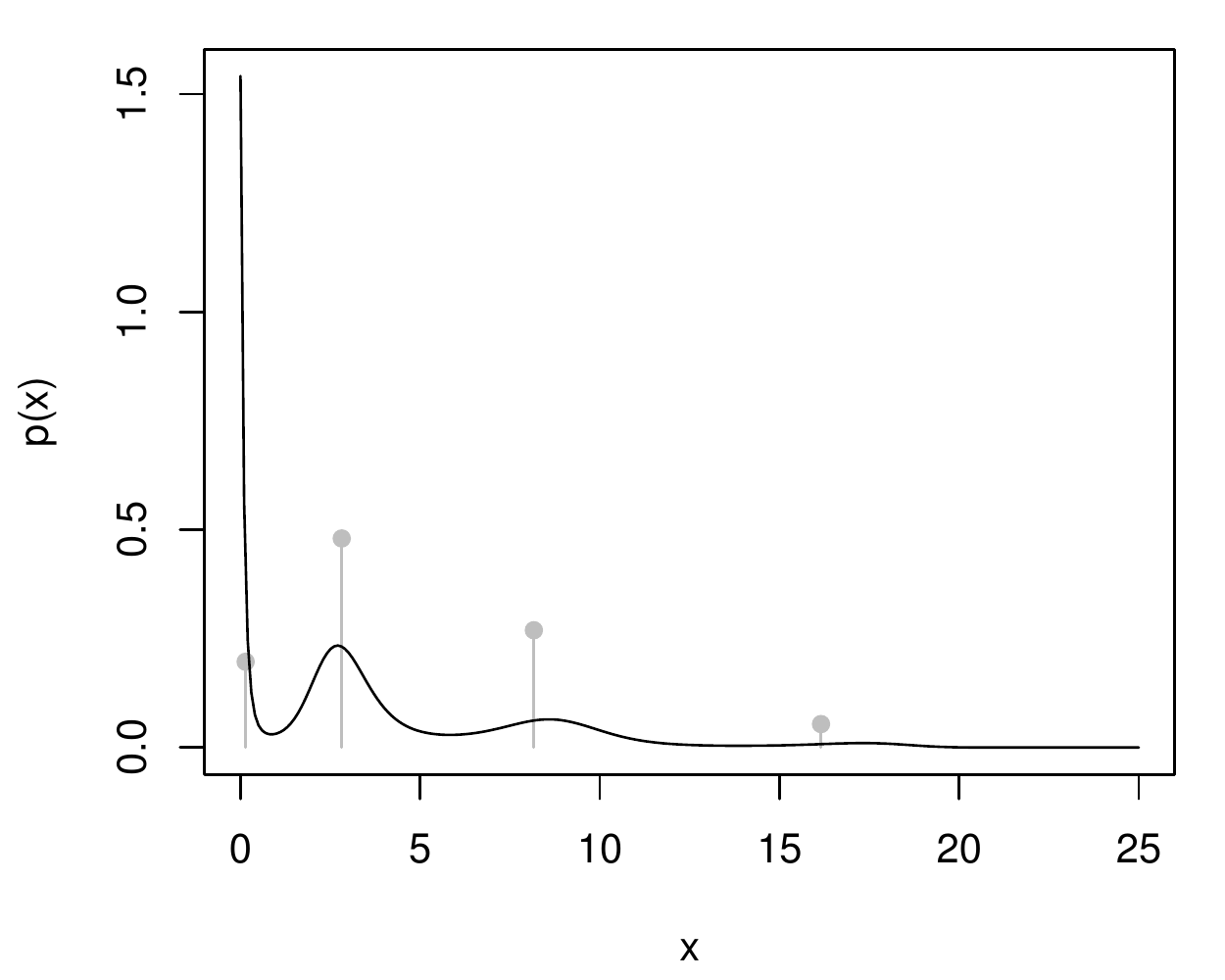}}}
\subfigure[$T=500$ iterations]{\scalebox{0.6}{\includegraphics{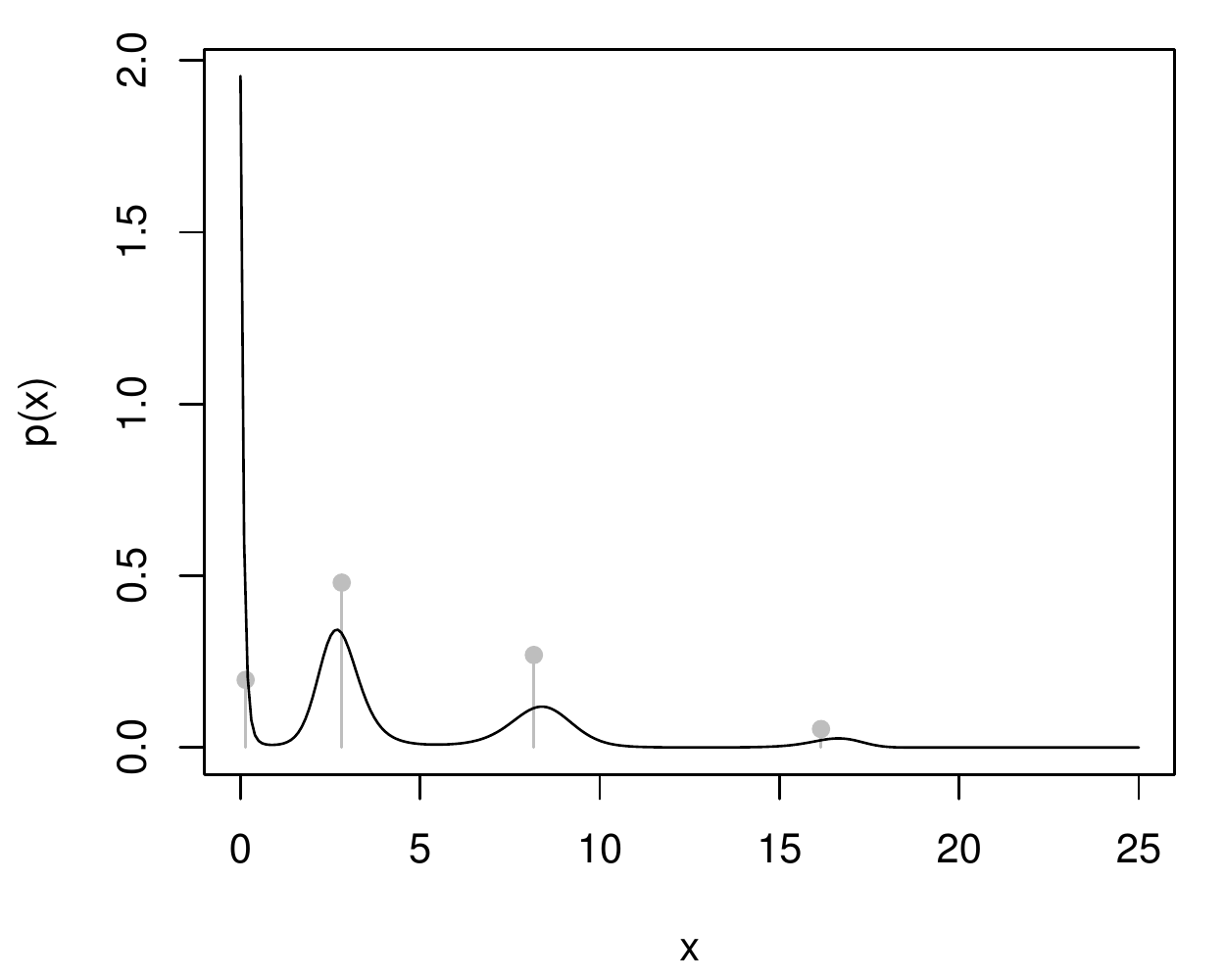}}}
\subfigure[$T=5000$ iterations]{\scalebox{0.6}{\includegraphics{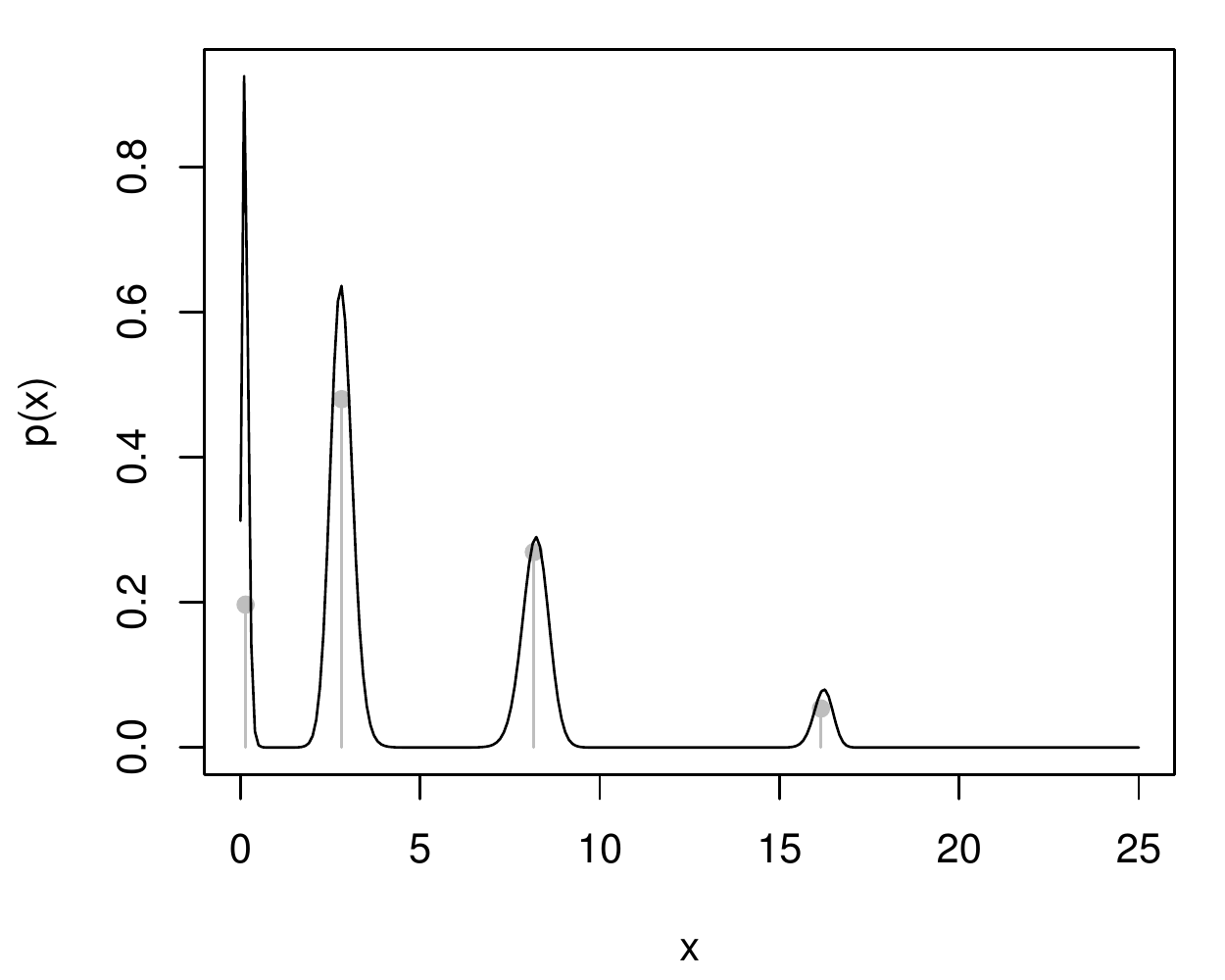}}}
\subfigure[Log-likelihood path, $t=1,\ldots,10$]{\scalebox{0.6}{\includegraphics{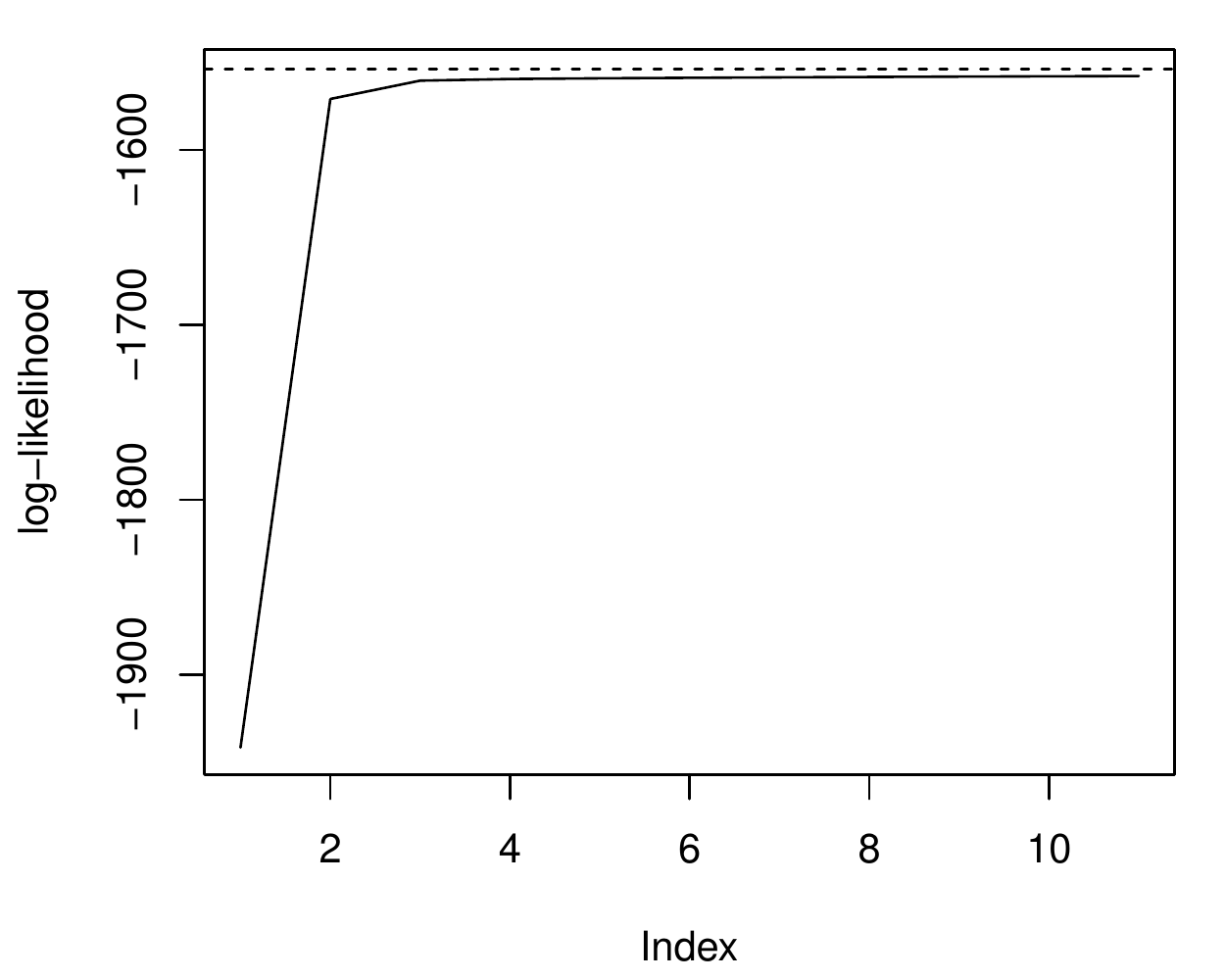}}}
\caption{Plot of the mixing density estimate $p_T$ for the Poisson mixture model for several values of $T$; vertical gray lines correspond to the nonparametric MLE.  Panel~(f) shows the path of the log-likelihood function compared to the maximum value (dashed line).}
\label{fig:thai}
\end{center}
\end{figure}

\subsection{Properties}
\label{SS:properties}

This section seeks to give a rigorous demonstration of two key results suggested in Sections~\ref{SS:definition}--\ref{SS:poisson} above.  First is the conjecture coming from Panel~(f) in Figure~\ref{fig:thai}, namely, that the likelihood function is non-decreasing as a function of $t$ in \eqref{eq:algorithm}, which implies that the iterations are not moving down on the likelihood surface.    

\begin{prop}
\label{prop:em}
For $(P_t)_{t=0}^\infty$ defined as \eqref{eq:algorithm}, the likelihood $t \mapsto L(P_t)$ is non-decreasing.
\end{prop}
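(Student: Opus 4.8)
The plan is to recognize the recursion \eqref{eq:algorithm} as an EM algorithm \citep{dlr} for the nonparametric mixture likelihood \eqref{eq:likelihood}, with the mixing variable playing the role of the missing data, and then to establish monotonicity by the usual ``ascent'' argument specialized to this infinite-dimensional setting. For each $i$ and each density $p$ on the mixing space write $\pi_{p,i}(x) = k(Y_i\mid x)\,p(x)/f_p(Y_i)$ for the corresponding posterior density, and set $\pi_{t,i} := \pi_{p_t,i}$; then \eqref{eq:algorithm} reads exactly $p_{t+1} = \frac1n\sum_{i=1}^n \pi_{t,i}$, i.e.\ $p_{t+1}$ is the $M$-step maximizer. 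Throughout I would assume $f_0(Y_i)\in(0,\infty)$ for every $i$ (implicit already in \eqref{eq:likelihood} being meaningful, and guaranteed whenever $p_0$ is positive on $\bigcup_i \supp k(Y_i\mid\cdot)$), and note that this condition, together with the obvious inclusion $\supp p_{t+1}\subseteq\supp p_t$ from \eqref{eq:algorithm}, propagates to all $t$, so that $\ell(p_t)=\sum_i\log f_t(Y_i)$ is finite and each $\pi_{t,i}$ is a genuine density.

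The key step is the variational (evidence-lower-bound) identity: for any density $p$ and any densities $q_1,\dots,q_n$ on the mixing space,
\[
\ell(p) \;=\; \sum_{i=1}^n \int q_i(x)\,\log\frac{k(Y_i\mid x)\,p(x)}{q_i(x)}\,dx \;+\; \sum_{i=1}^n \int q_i(x)\,\log\frac{q_i(x)}{\pi_{p,i}(x)}\,dx ,
\]
obtained by writing $f_p(Y_i) = k(Y_i\mid x)p(x)/\pi_{p,i}(x)$, taking logarithms, and integrating against $q_i$. The second sum is a sum of Kullback--Leibler divergences, hence nonnegative and equal to zero when $q_i = \pi_{p,i}$ for all $i$. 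Denoting the first sum by $\mathcal{F}(p;q_1,\dots,q_n)$, this gives $\ell(p)\ge\mathcal{F}(p;q_1,\dots,q_n)$ with equality at $q_i=\pi_{p,i}$. Applying it with $q_i=\pi_{t,i}$ yields $\ell(p_t)=\mathcal{F}(p_t;\pi_{t,1},\dots,\pi_{t,n})$ and $\ell(p_{t+1})\ge\mathcal{F}(p_{t+1};\pi_{t,1},\dots,\pi_{t,n})$, so it suffices to show $\mathcal{F}(p_{t+1};\pi_{t,\cdot})\ge\mathcal{F}(p_t;\pi_{t,\cdot})$.

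For that inequality I would isolate the dependence on the first argument: only the $\log p(x)$ piece of $\mathcal{F}(p;\pi_{t,\cdot})$ involves $p$, and by \eqref{eq:algorithm} it equals $\int\bigl(\sum_i\pi_{t,i}(x)\bigr)\log p(x)\,dx = n\int p_{t+1}(x)\log p(x)\,dx$. Hence
\[
\mathcal{F}(p_{t+1};\pi_{t,\cdot}) - \mathcal{F}(p_t;\pi_{t,\cdot}) \;=\; n\int p_{t+1}(x)\,\log\frac{p_{t+1}(x)}{p_t(x)}\,dx ,
\]
which is $n$ times a Kullback--Leibler divergence (well defined because $\supp p_{t+1}\subseteq\supp p_t$) and therefore $\ge 0$ by Gibbs' inequality. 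Chaining the three displayed facts gives $\ell(p_{t+1})\ge\mathcal{F}(p_{t+1};\pi_{t,\cdot})\ge\mathcal{F}(p_t;\pi_{t,\cdot})=\ell(p_t)$, and since $L=\exp\ell$ is increasing this is the assertion.

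The only real obstacle is integrability and well-definedness of the quantities entering the decomposition: one must rule out an ``$\infty-\infty$'' in $\mathcal{F}$ and check that each KL integral is a legitimate element of $[0,\infty]$. This is exactly what the support-monotonicity observation and the finiteness of $\ell(p_t)$ take care of: wherever $p_t(x)=0$ one also has $p_{t+1}(x)=\pi_{t,i}(x)=0$, so every log-ratio is defined a.e.\ with respect to the relevant measure, and the only way any of these integrals can diverge is to $-\infty$, which is harmless for the direction of the inequalities. (Alternatively, once \eqref{eq:algorithm} has been identified as the $E$- and $M$-steps, the conclusion is just the general monotonicity of EM; cf.\ \citet{dlr}, and, for the mixture-deconvolution instance, \citet{vardi.shepp.kaufman.1985} and \citet{laird.louis.1991}.)
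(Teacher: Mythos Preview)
Your proof is correct and is essentially the same EM-ascent argument the paper gives: both reduce $\ell(p_{t+1})-\ell(p_t)$ to the Kullback--Leibler divergence $\int p_{t+1}\log(p_{t+1}/p_t)\,dx\ge 0$, with Jensen/Gibbs supplying the inequality. The only difference is packaging---you route the computation through the explicit ELBO decomposition $\ell(p)=\mathcal{F}(p;q)+\sum_i D(q_i\|\pi_{p,i})$ and the $M$-step optimality of $p_{t+1}$, whereas the paper applies Jensen directly to $\int\log(f_{t+1}/f_t)\,d\Fhat$ after substituting the recursion for $f_{t+1}$; the paper also remarks (as you do parenthetically) that identifying \eqref{eq:algorithm} as an EM step would yield the result immediately.
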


\begin{proof}
Write the iterates in \eqref{eq:algorithm} as 
\[ p_{t+1}(x) = \int \frac{k(y \mid x) p_t(x)}{f_t(y)} \, \Fhat(dy), \]
where $\Fhat$ is the empirical distribution of $Y_1,\ldots,Y_n$.  Then the log-likelihood satisfies 
\begin{align*}
\ell(p_{t+1}) - \ell(p_t) & = \int \log \frac{f_{t+1}(y)}{f_t(y)} \, \Fhat(dy) \\
& = \int \left[ \log \int \frac{\int k(y \mid x) k(y' \mid x) \, p_t(x) \,dx}{f_t(y) \, f_t(y')} \, \Fhat(dy') \right]\, \Fhat(dy). 
\end{align*}
Applying Jensen's inequality, the right-hand side is lower-bounded by 
\[ \int \Bigl\{ p_t(x) \int \frac{k(y \mid x)}{f_t(y)} \, \Fhat(dy) \Bigr\} \log \Bigl\{ \int \frac{k(y' \mid x)}{f_t(y')} \, \Fhat(dy') \Bigr\} \, dx, \]
which is the Kullback--Leibler divergence of $p_t$ from $g_t(x) = p_t(x) \int \{k(y \mid x) / f_t(y) \} \, \Fhat(dy)$.  Since the latter is non-negative, it follows that $\ell(p_{t+1}) \geq \ell(p_t)$, as was to be shown. 
\end{proof}

We have opted here for a direct proof of non-decreasing likelihood, though this could also be verified by confirming that \eqref{eq:algorithm} is an EM algorithm with the parameter space $\PP$, the set of all probability measures $P$ on $\X$ having strictly positive densities.  While a characterization of \eqref{eq:algorithm} as an EM algorithm may not be new---e.g., \citet{laird.louis.1991} give a similar characterization of a special case of \eqref{eq:algorithm}---there appears to be no general results on whether the sequence actually converges to the nonparametric MLE, $\hat P$.  Of course, not moving downwards on the likelihood surface suggests that the algorithm converges to $\hat P$, but a proof requires some care.  Of critical importance is that, since $\PP$ is convex, the general results in \citet[][Sec.~5A]{csiszartusnady} imply that 
\begin{equation}
\label{eq:limit}
L(P_t) \to L(\hat P), \quad \text{as $t \to \infty$}, 
\end{equation}
a necessary condition for $P_t$ to converge to $\hat P$. 

\begin{prop}
\label{prop:limit}
Suppose for each $i=1,\ldots,n$ that $x \mapsto k(Y_i \mid x)$ is a continuous and strictly positive map on $\X$. Also, assume that for every $\epsilon > 0$ there exists a compact $\X_0 \subset \X$ such that $\sup_{x \in \X_0^c} k(Y_i \mid x) < \epsilon$ for each $i = 1, \ldots, n$. If there exists the unique nonparametric MLE $\hat P$, then $P_t \to \hat P$ weakly as $t \to \infty$.
\end{prop}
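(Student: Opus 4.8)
The plan is to establish two facts and combine them: (i) the iterate sequence $(P_t)$ is tight, and (ii) every weak subsequential limit of $(P_t)$ equals $\hat P$; together with Prokhorov's theorem these give $P_t \to \hat P$ weakly. It helps to rewrite \eqref{eq:algorithm} multiplicatively as $p_{t+1}(x) = h_t(x)\, p_t(x)$, where $h_t(x) = \frac1n\sum_{i=1}^n k(Y_i\mid x)/f_t(Y_i)$. Since $p_0 > 0$, induction gives $p_t > 0$ and $\int p_t\,dx = 1$, so each $P_t$ is a probability measure; moreover continuity of $x\mapsto k(Y_i\mid x)$ together with the hypothesis that this map can be made smaller than any $\eps>0$ outside a compact set makes $M_i := \sup_{x} k(Y_i\mid x)$ finite, so $0 < f_t(Y_i) \le M_i$ for all $t$ and $i$. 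Finally, from Proposition~\ref{prop:em}, $\sum_{j=1}^n \log f_t(Y_j) = \ell(p_t) \ge \ell(p_0) > -\infty$ for every $t$; subtracting the bounds $\log f_t(Y_j) \le \log M_j$ for $j \ne i$ yields a $t$-free lower bound $f_t(Y_i) \ge c_i := \exp\{\ell(p_0) - \sum_{j\ne i}\log M_j\} > 0$.

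For tightness, note that for any measurable $K$ the multiplicative form gives $P_{t+1}(K^c) = \int_{K^c} h_t\,p_t\,dx \le \bigl(\sup_{x\in K^c} h_t\bigr)\, P_t(K^c)$, and the lower bound above gives $\sup_{x\in K^c} h_t \le \frac1n\sum_{i=1}^n\bigl(\sup_{x\in K^c} k(Y_i\mid x)\bigr)/c_i$. Using the kernel hypothesis, choose a compact $\X_0\subset\X$ with $\sup_{x\in\X_0^c}k(Y_i\mid x) < \tfrac12\min_j c_j$ for every $i$; then $\sup_{x\in\X_0^c} h_t \le \tfrac12$ for all $t$, so $P_t(\X_0^c) \le 2^{-t}$. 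Given $\eps>0$, pick $T_0$ with $2^{-T_0}<\eps$ and, for each of the finitely many $t\le T_0$, a compact $K_t$ with $P_t(K_t^c)<\eps$ (each $P_t$ being tight as an individual probability measure); then $K_\eps := \X_0\cup\bigcup_{t\le T_0}K_t$ is compact with $P_t(K_\eps^c)<\eps$ for all $t$, proving tightness.

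For (ii), observe that since $x\mapsto k(Y_i\mid x)$ is bounded and continuous, $P\mapsto\int k(Y_i\mid x)\,P(dx)$ is continuous for the weak topology, hence so is $P\mapsto L(P)$. If $P_{t_k}\to P_\infty$ weakly along a subsequence (the limit being a probability measure, by tightness), then $L(P_\infty) = \lim_k L(P_{t_k}) = L(\hat P)$ by \eqref{eq:limit}, so $P_\infty$ also maximizes $L$; uniqueness of the nonparametric MLE then forces $P_\infty = \hat P$. Combining (i) and (ii) gives $P_t\to\hat P$ weakly.

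The crux is the tightness argument: the one-step inequality $P_{t+1}(K^c)\le(\sup_{x\in K^c}h_t)\,P_t(K^c)$ is immediate, but upgrading it to a single compact set valid for all $t$ simultaneously needs both the $t$-free lower bound $f_t(Y_i)\ge c_i$ (which is exactly where the monotonicity from Proposition~\ref{prop:em} enters) and the separate treatment of the finitely many early iterates. Step (ii) is then routine given the weak continuity of $L$ and the likelihood convergence \eqref{eq:limit}.
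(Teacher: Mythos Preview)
Your argument is correct and follows essentially the same route as the paper's proof: establish tightness of $(P_t)$, invoke Prokhorov, use boundedness and continuity of $x\mapsto k(Y_i\mid x)$ to get weak continuity of $L$, and then combine \eqref{eq:limit} with uniqueness of $\hat P$ to identify every subsequential limit. The only noteworthy difference is in the mechanics of the tightness step. The paper obtains $\liminf_t f_t(Y_i)>0$ directly from \eqref{eq:limit} (since $L(P_t)\to L(\hat P)>0$), then chooses $\X_0$ so that $h_t\le 1$ on $\X_0^c$ eventually, giving $p_t(x)\le p_T(x)$ there for all $t\ge T$; tightness then follows from tightness of the single measure $P_T$ together with the finitely many earlier iterates. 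You instead extract a uniform-in-$t$ lower bound $f_t(Y_i)\ge c_i$ from Proposition~\ref{prop:em} alone, which lets you push $h_t$ below $\tfrac12$ on $\X_0^c$ and obtain the sharper geometric decay $P_t(\X_0^c)\le 2^{-t}$. Your version is slightly more elementary in that it does not need the Csisz\'ar--Tusn\'ady input \eqref{eq:limit} for the tightness part (only for the identification of the limit), but the two arguments are otherwise the same.
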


\begin{proof}
Since a continuous function on a compact set is bounded, the map $x \mapsto k(Y_i \mid x)$ is bounded on $\X$ for each $i=1, \ldots, n$. Since $\lim_{t\rightarrow\infty} L(P_t) > 0$ by \eqref{eq:limit}, we have $\liminf_t f_t(Y_i) > 0$ for each $i=1, \ldots, n$. For $\epsilon = \min_i k(Y_i \mid x)$, let $\X_0 \subset \X$ be a compact set such that $\sup_{x \in \X_0^c} k(Y_i \mid x) < \epsilon$ for every $i \leq n$. Then, there exists an integer $T$ such that $p_t(x) \leq p_T(x)$ for every $t \geq T$ and $x \in \X_0^c$. It follows that $(P_t)_{t \geq 0}$ is a uniformly tight sequence of probability measures.

By Prokhorov's theorem, every subsequence $P_{t_s}$ has a further subsequence $P_{t_{s(r)}}$ that converges weakly to a probability measure $P^\star$ in $\overline\PP$, the set of {\em all probability measures on $\X$}.  Since $x \mapsto k(Y_i \mid x)$ is continuous and bounded on $\X$, $P \mapsto L(P)$ is also continuous with respect to the topology of weak convergence on $\overline\PP$.  Therefore, 
\[ L(P_{t_{s(r)}}) \to L(P^\star), \quad r \to \infty. \]
It follows from \eqref{eq:limit} that $P^\star$ is a nonparametric MLE.  We assumed that $\hat P$ is the unique nonparametric MLE, so every subsequence of $P_t$ has a further subsequence converging weakly to $\hat P$, which implies $P_t \to \hat P$ weakly. 
\end{proof}

While convergence of \eqref{eq:algorithm} to the nonparametric MLE is important to explain and justify the method itself, the EM's convergence to the maximizer is known to be relatively slow \citep[e.g.,][]{jamshidian.jennrich.1997, meng.vandyk.1997}.  In fact, Figure~\ref{fig:thai} reveals that we need more than 5000 iterations of \eqref{eq:algorithm} in order to reach the MLE.  However, since we are interested in estimating a {\em density}, we prefer to stop the algorithm short of convergence to $\hat P$.  So Propositions~\ref{prop:em}--\ref{prop:limit}, along with the speed at which the iterations get close to the maximum, as demonstrated in Panel~(f) of Figure~\ref{fig:thai}, suggest that $p_T$, for suitable $T$, is a {\em near-maximum likelihood estimator} (nMLE) of the mixing density.

\section{A smooth nonparametric near-MLE}
\label{S:smoothed}

\subsection{Proposal}

As described above, for the purpose of maintaining smoothness of the mixing density estimator $\hat p_T$, we have reason to stop the algorithm at a relatively small value $T$.  But we want to ensure that the likelihood at $\hat p_T$ is sufficiently large and, hence, can be understood as a near-MLE, or nMLE.  Towards this, we propose the following stopping rule:
\begin{equation}
\label{eq:rule}
\text{stop at $T$ such that $\tilde \ell_{\text{ext}} - \ell(p_T) < \delta |\tilde \ell_{\text{ext}}|$},
\end{equation}
where $\delta$ is a small constant to be specified, and $\tilde \ell_{\text{ext}}$ is the log-likelihood evaluated at some external estimate of the mixture density $f$.  One option is for $\tilde \ell_{\text{ext}}$ to be the likelihood evaluated at the actual nonparametric MLE, but a reasonable alternative is to take $\tilde\ell_{\text{ext}}$ based on a simple kernel estimate of $f$.  In our experiments that follow, we take $\delta=0.05$ and $\tilde\ell_{\text{ext}}$ based on the default kernel estimator in the {\tt density} function in R.

%We propose to stop the EM iteration when $\log L(P_{t+1}) - \log L(P_t) < \epsilon$, where $\epsilon > 0$ is a prespecified moderately small number.  Note that $\epsilon$ is typically chosen as a very small number, {\it e.g.} $\epsilon = 10^{-15}$ in \cite{liulevinezhu2009}, to guarantee the convergence.  Our suggestion here is to take a much larger number to attain a reasonable estimate of the mixing density.  The value of $\epsilon$ can be chosen depending on kernels because the convergence speed of $L(P_t)$ to $\sup_P L(P)$ might differ for each kernel.  In practice, the working kernel is known in advance, so the value can be easily specified.  For examples given in the next subsection, $\epsilon = 10^{-2}$ performs quite well with only few EM iterations.

\subsection{Numerical results}

Here we carry out a number of simulations to demonstrate the performance of our proposed nonparametric nMLE of the mixing density.  In each case, we take $Y_1, \ldots, Y_n$ to be an iid sample of size $n=500$ from the mixture density $f(y) = \int k(y \mid x) p(x) \, dx$.  Three different kernels will be considered:
\begin{description}
\item[\sc Kernel 1.] $k(y \mid x) = {\sf N}(y \mid x, \frac12)$;
\item[\sc Kernel 2.] $k(y \mid x) = \frac{1}{0.3} \, {\sf t}(\frac{y-x}{0.3} \mid \text{df} = 5)$;
\item[\sc Kernel 3.] $k(y \mid x) = {\sf Gamma}(y \mid \text{shape}=20x, \, \text{rate}=20)$.  
\end{description}
We will also consider three different mixing densities, supported (roughly) on $[0,10]$:
\begin{description}
\item[\sc Mixing 1.] $p(x) = \frac{1}{10} \, {\sf Beta}(\frac{x}{10} \mid \text{shape}=5, \, \text{shape}=5)$;
\item[\sc Mixing 2.] $p(x) = \frac34 \, {\sf N}(x \mid 3, 0.8^2) + \frac14 \, {\sf N}(x \mid 7, 0.8^2)$;
\item[\sc Mixing 3.] $p(x) = {\sf Gamma}(x \mid \text{shape}=2, \, \text{rate}=1)$.  
\end{description}

For each kernel and mixing density pair, 100 datasets are generated, and mixing density estimates using the proposed nMLE based on stopping rule \eqref{eq:rule} above; all are based on a ${\sf Unif}(0,10)$ starting initial guess.  Figure~\ref{fig:sim} summarizes the results of these simulations, plotting the 100 mixing density estimates for the individual datasets, the truth, the point-wise average of those 100 estimate, and the point-wise one-standard deviation range around the average.  In all cases, although there is variability around the truth in each individual case, as is expected, our proposed estimator is quite accurate overall.  Also, the $T$ defined by \eqref{eq:rule} was less than 5 in all of these runs.  For comparison, we also produced the mixing density estimator based on the predictive recursion algorithm described briefly in Section~\ref{S:intro}.  Figure~\ref{fig:tvbox} shows the predictive recursion to nMLE relative $L_1$ error.  Since the values all tend to be greater than one for all kernel and mixing density pairs, we conclude that nMLE tends to be more accurate than predictive recursion.

\begin{figure}
\begin{center}
\subfigure[Mixing 1, Kernel 1]{\scalebox{0.4}{\includegraphics{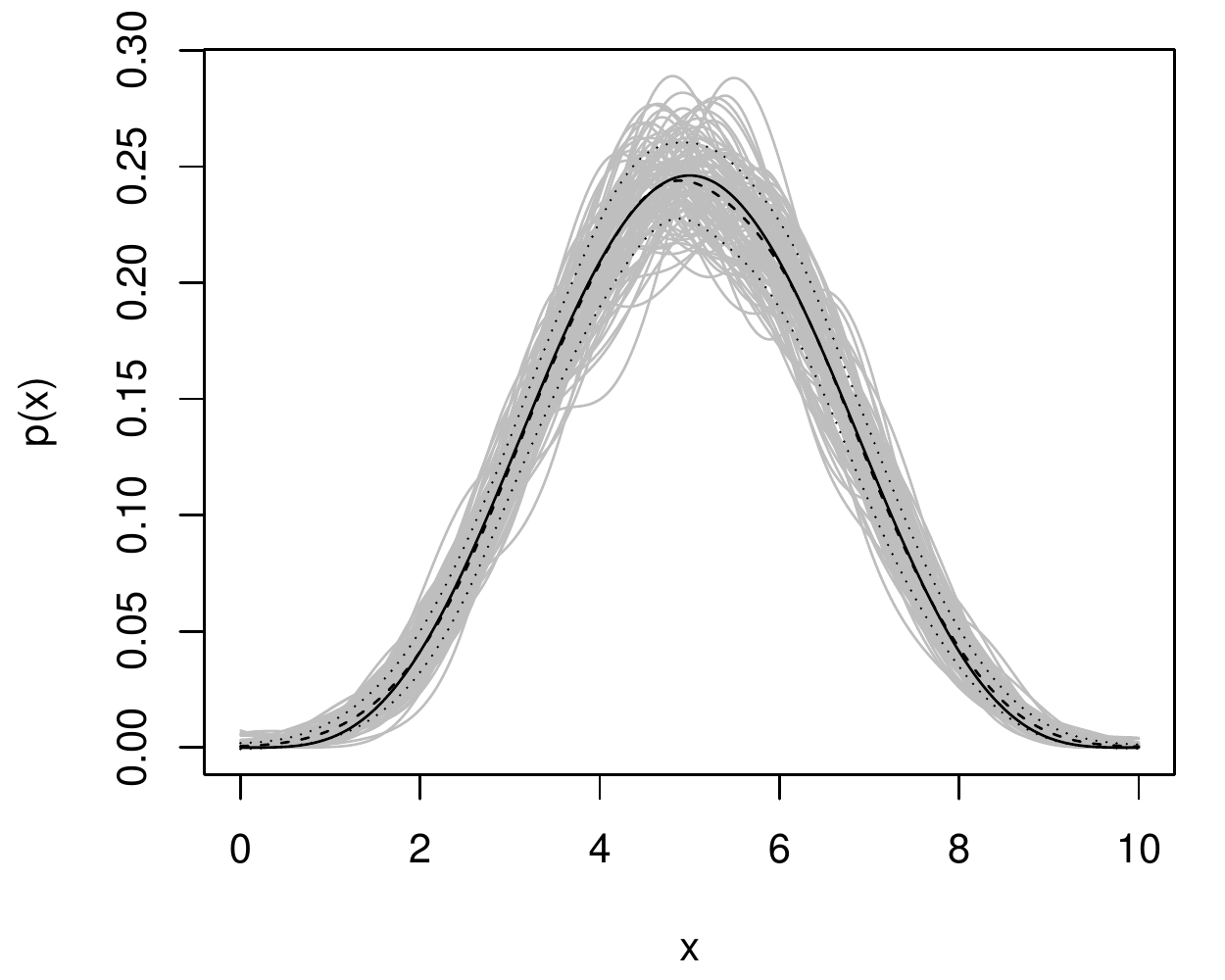}}}
\subfigure[Mixing 1, Kernel 2]{\scalebox{0.4}{\includegraphics{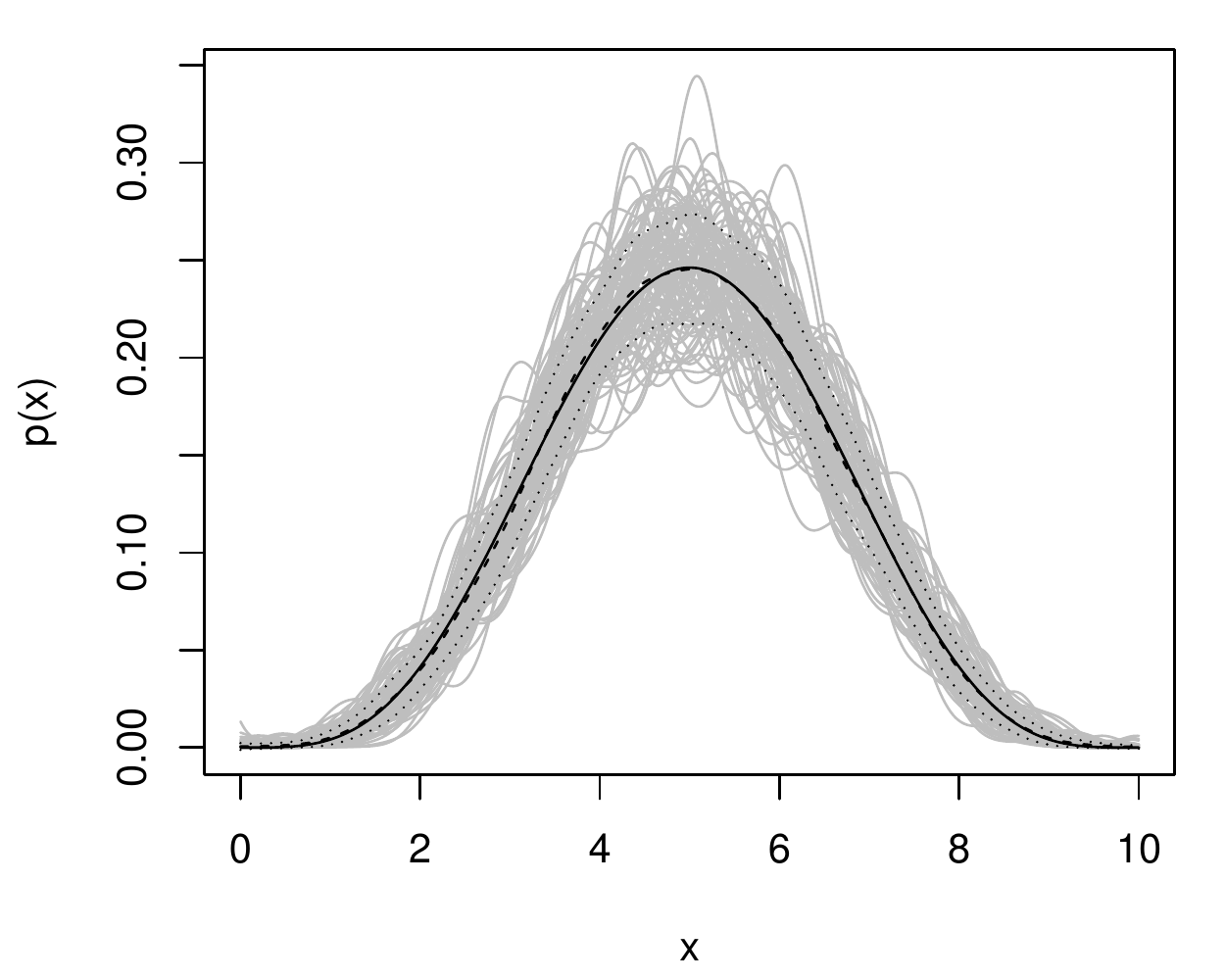}}}
\subfigure[Mixing 1, Kernel 3]{\scalebox{0.4}{\includegraphics{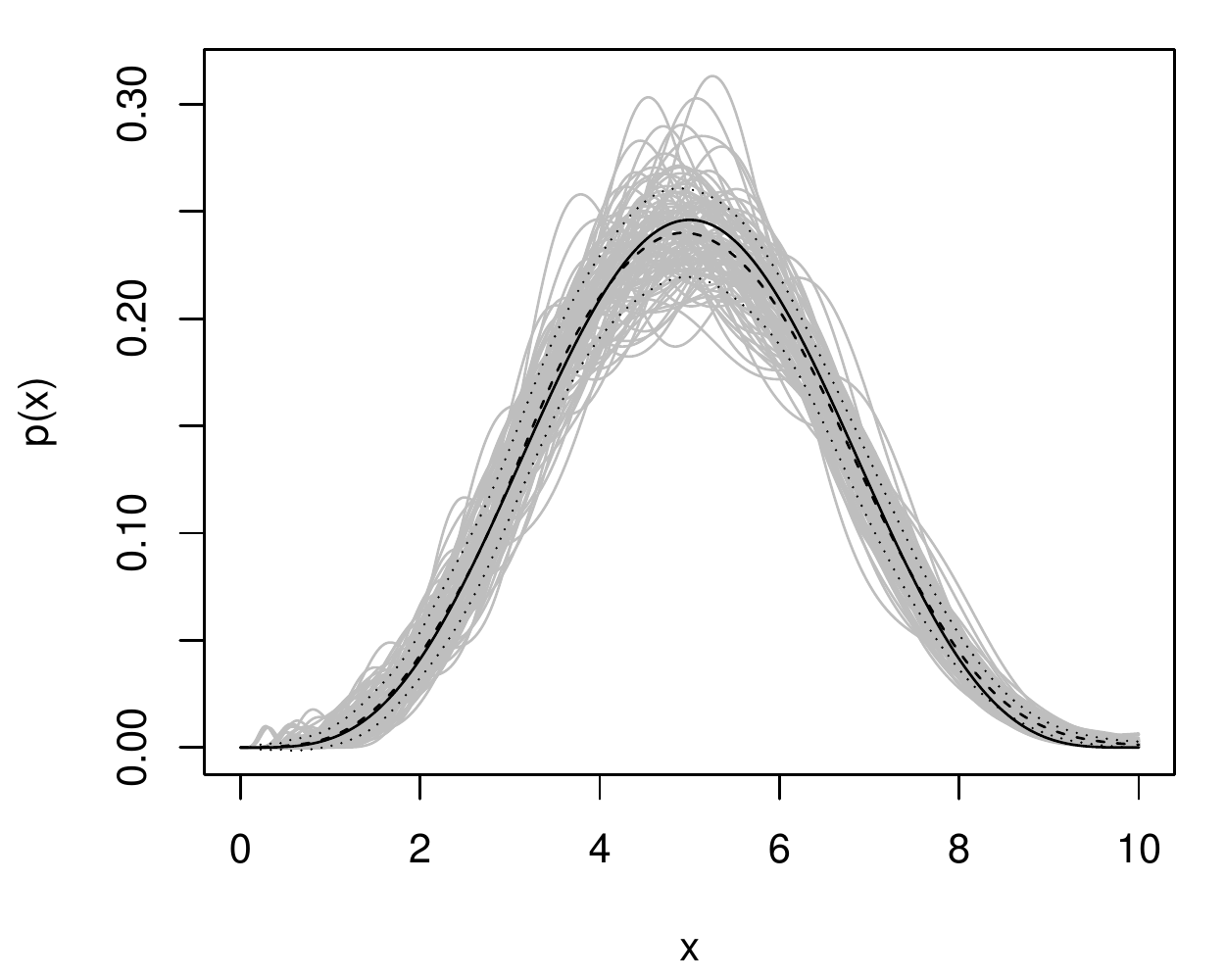}}}
\subfigure[Mixing 2, Kernel 1]{\scalebox{0.4}{\includegraphics{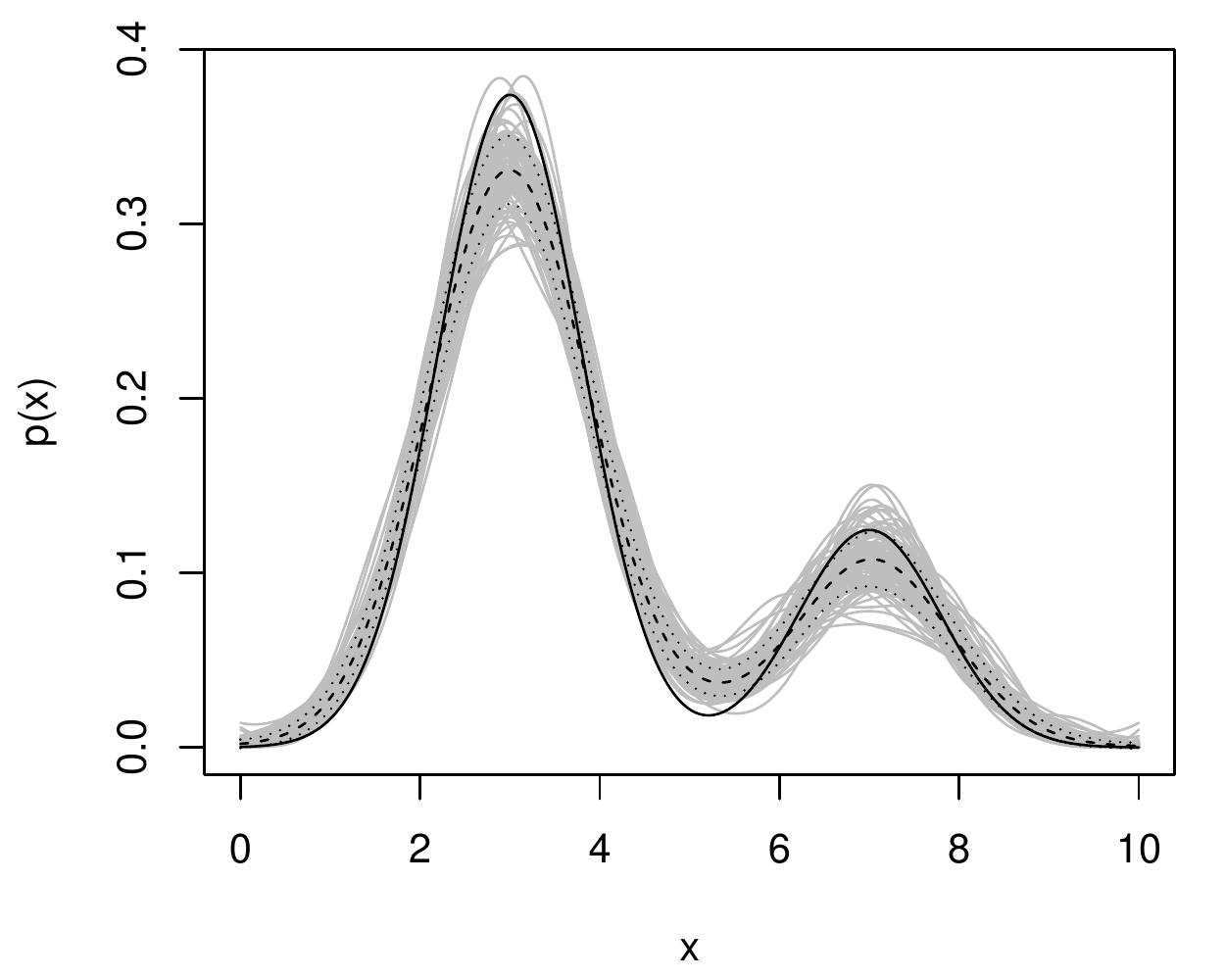}}}
\subfigure[Mixing 2, Kernel 2]{\scalebox{0.4}{\includegraphics{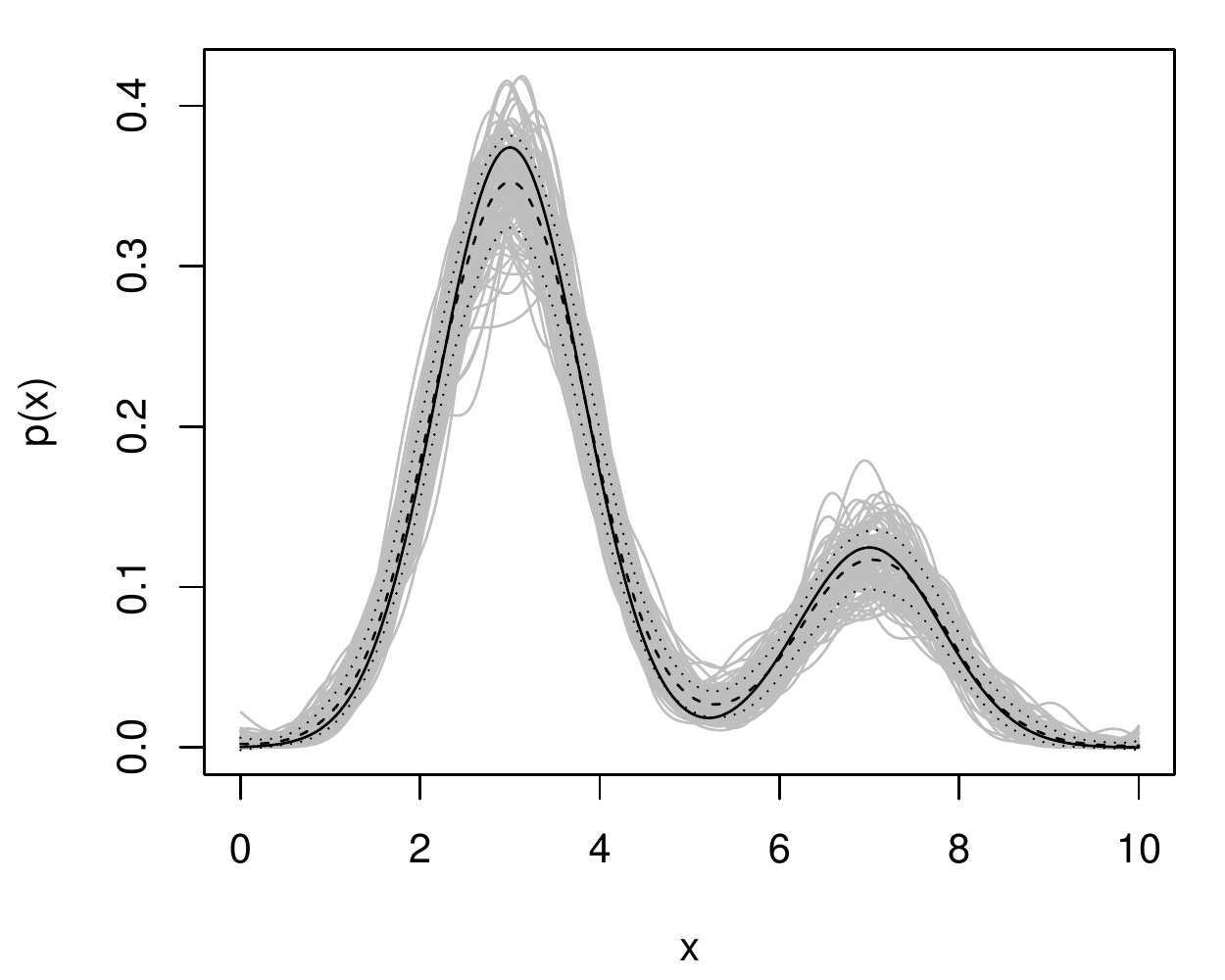}}}
\subfigure[Mixing 2, Kernel 3]{\scalebox{0.4}{\includegraphics{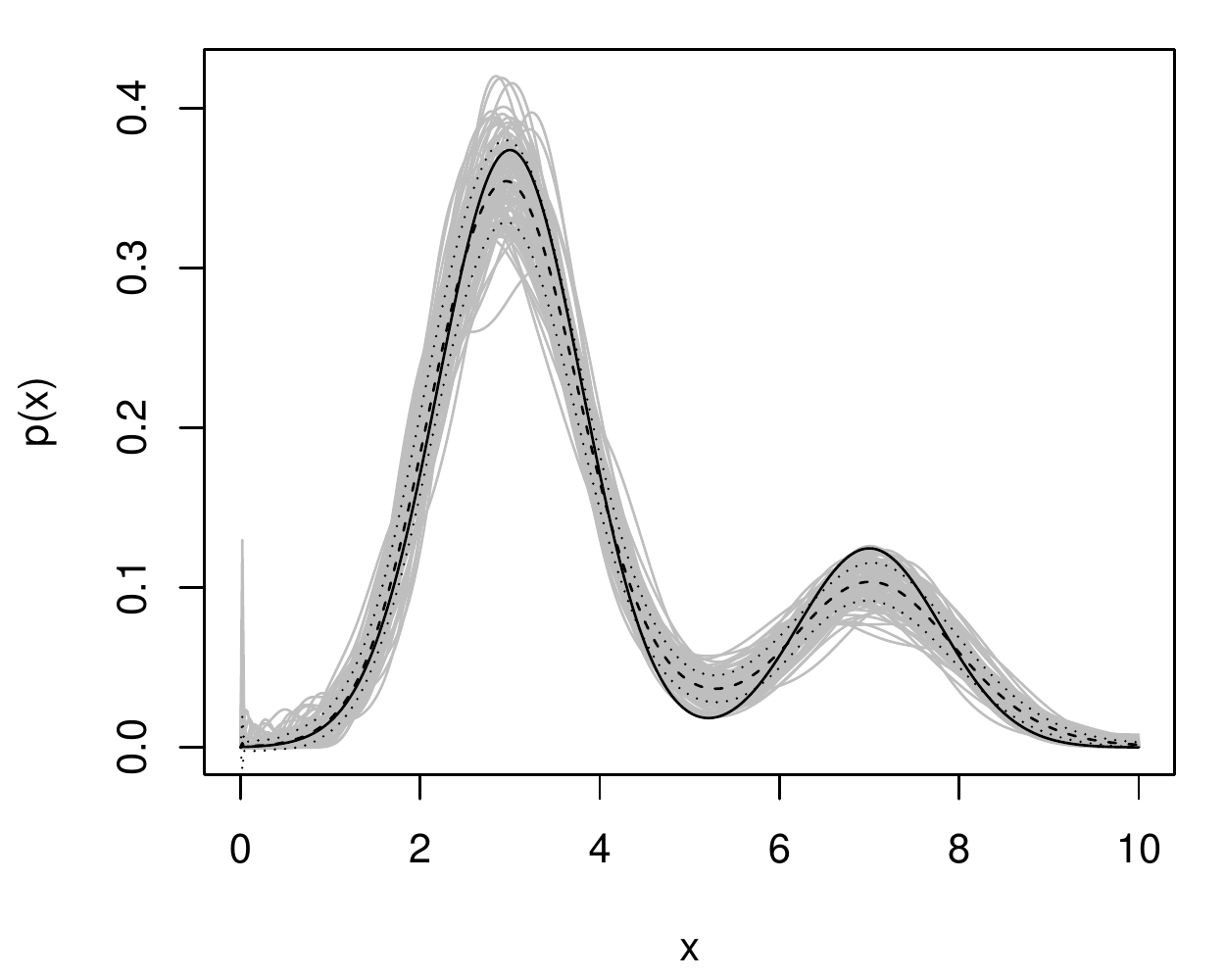}}}
\subfigure[Mixing 3, Kernel 1]{\scalebox{0.4}{\includegraphics{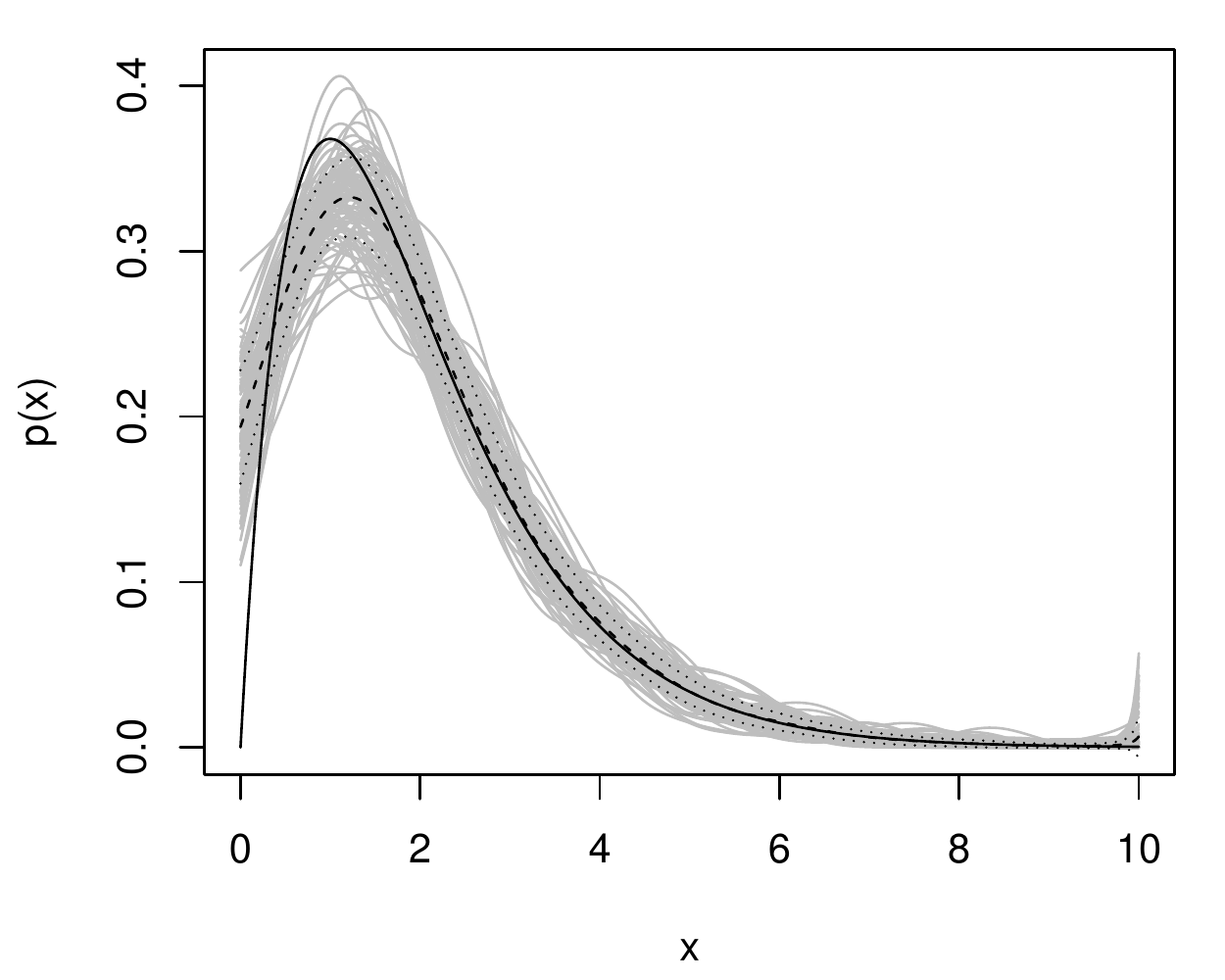}}}
\subfigure[Mixing 3, Kernel 2]{\scalebox{0.4}{\includegraphics{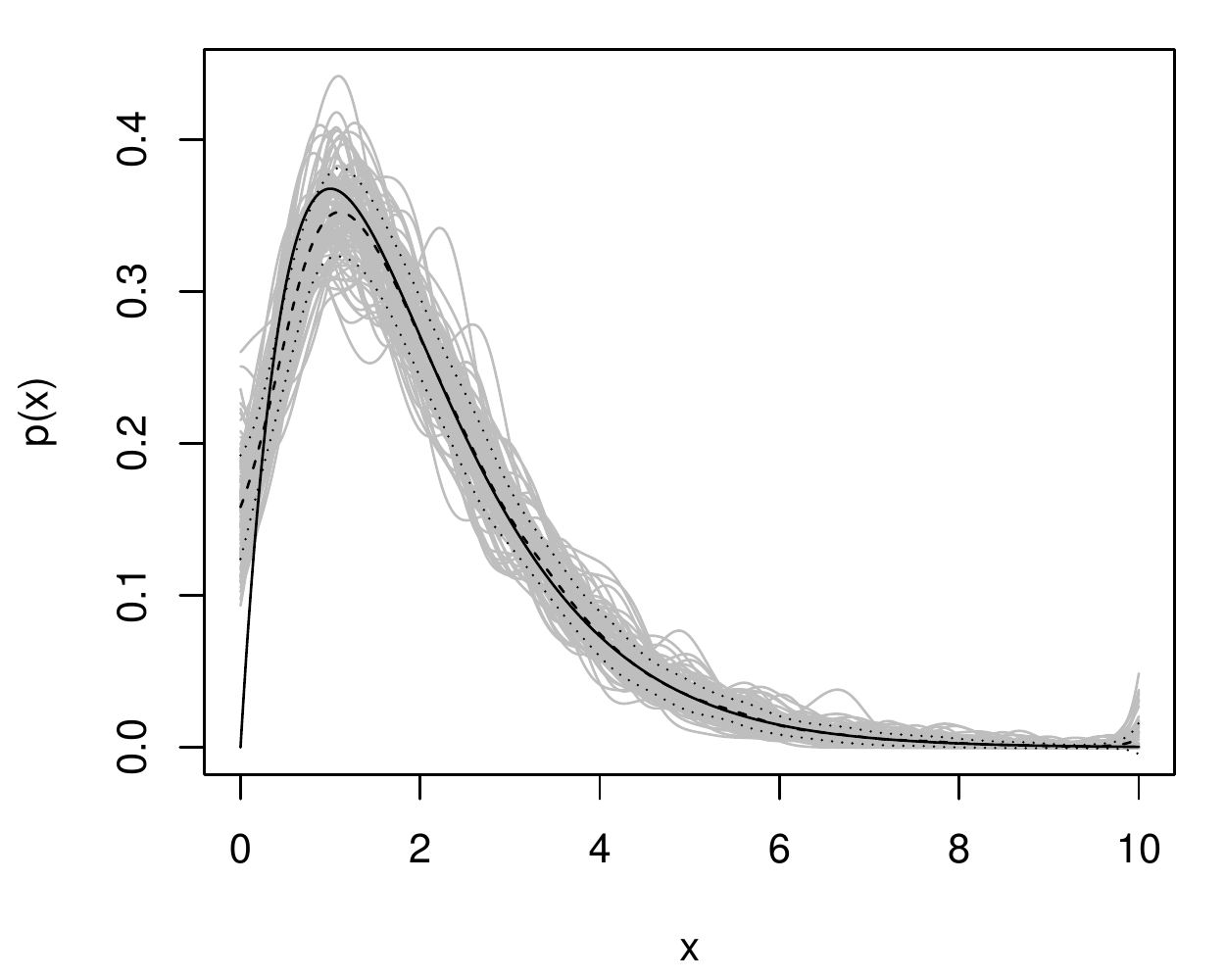}}}
\subfigure[Mixing 3, Kernel 3]{\scalebox{0.4}{\includegraphics{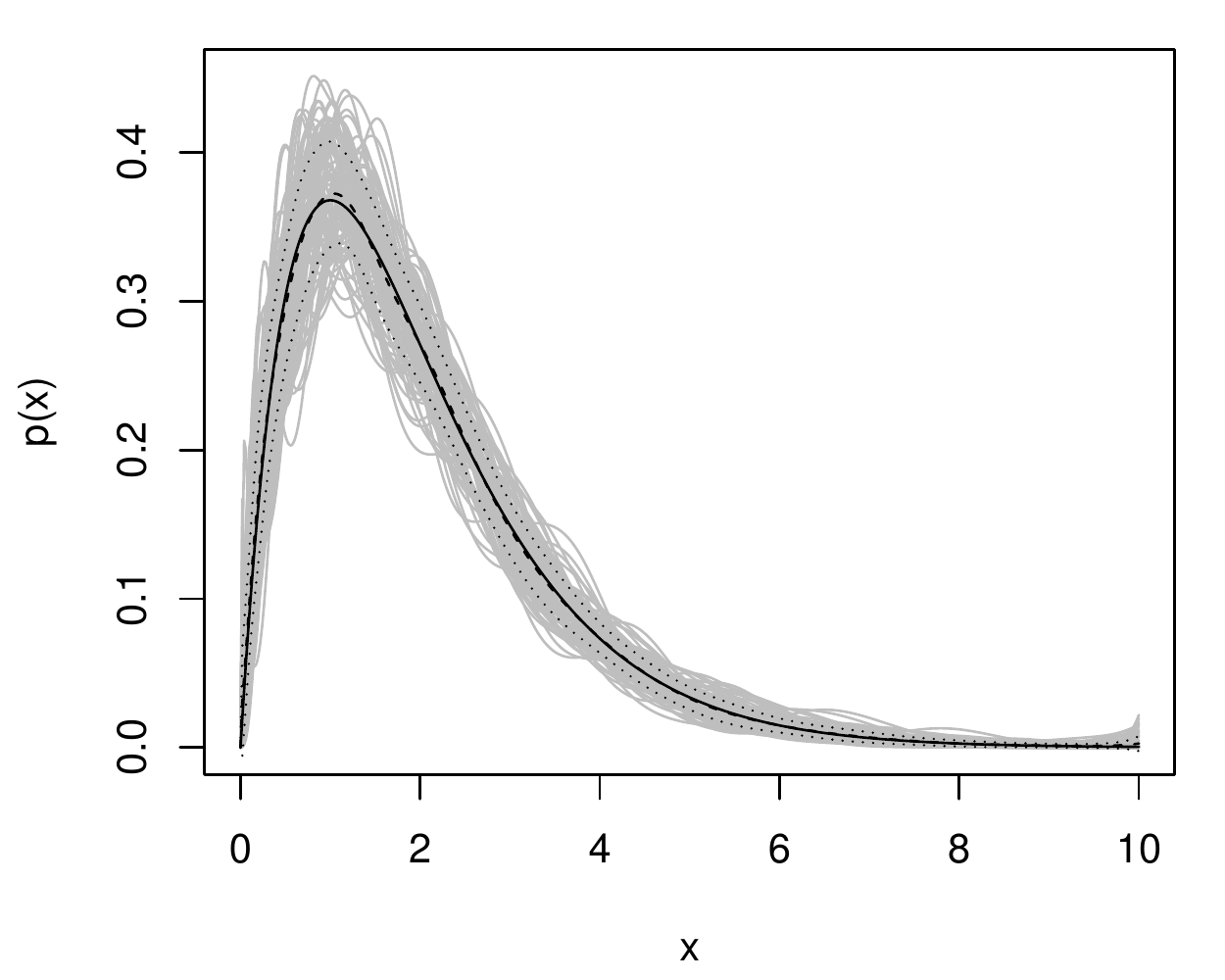}}}
\caption{For each mixing and kernel density pair, gray lines are the individual estimates, solid line is the true mixing density, dashed line is the point-wise average, and dotted lines are the point-wise one-standard deviation range around the average.}
\label{fig:sim}
\end{center}
\end{figure}

\begin{figure}
\begin{center}
\scalebox{0.75}{\includegraphics{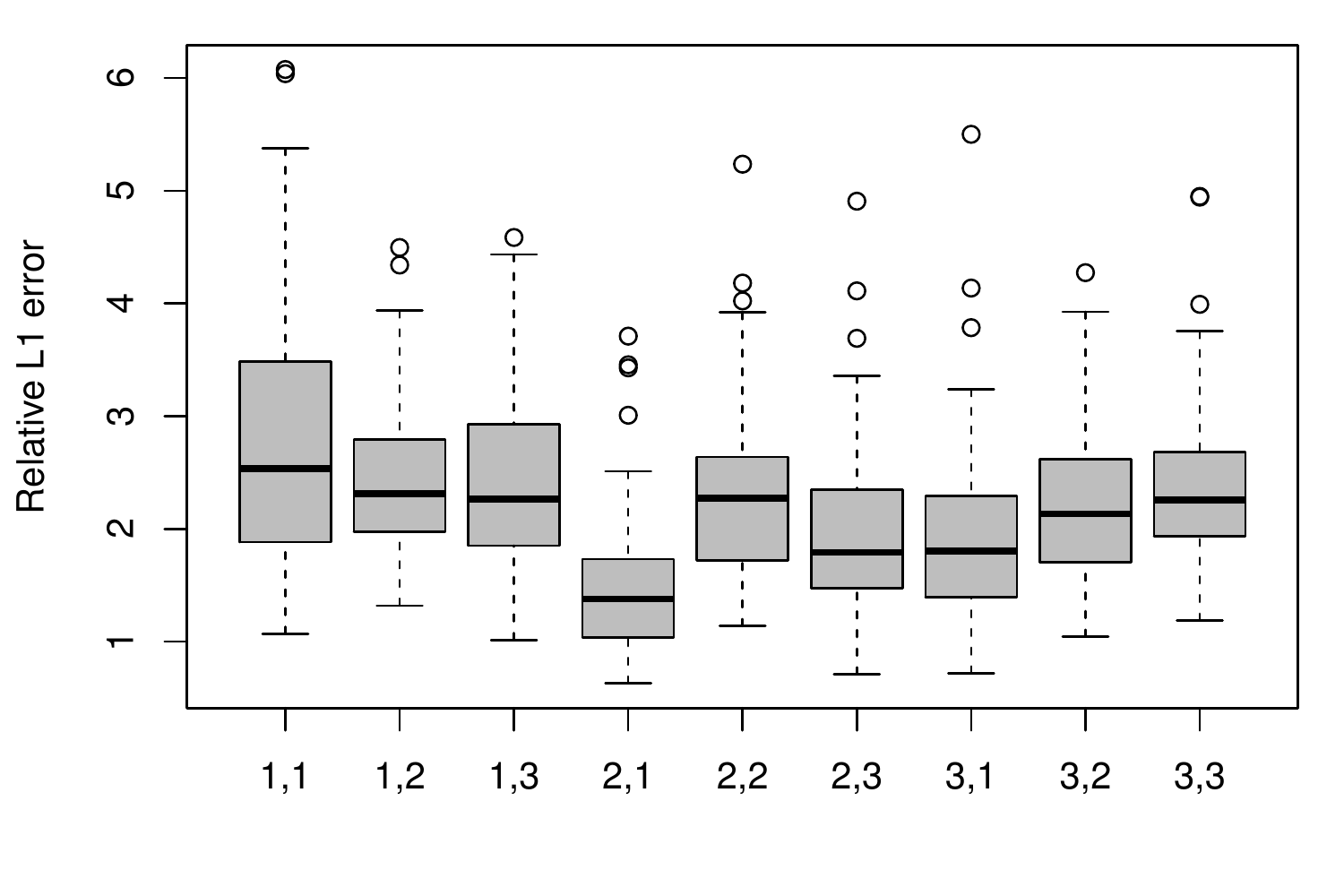}}
\caption{Side-by-side boxplot of the relative $L_1$ errors comparing the nMLE to the predictive recursion estimator, for each kernel and mixing density pair.}
\label{fig:tvbox}
\end{center}
\end{figure}

%\section{Discussion}
%\label{S:discuss}

\section*{Acknowledgments}

This work is partially supported by the National Science Foundation under grants DMS--1612891 and DMS--1737929

\ifthenelse{1=0}{
\bibliographystyle{apalike}
\bibliography{/Users/rgmarti3/Dropbox/Research/mybib}

\begin{thebibliography}{}

\bibitem[B{\"o}hning, 2000]{bohning}
B{\"o}hning, D. (2000).
\newblock {\em Computer-assisted Analysis of Mixtures and Applications:
  Meta-analysis, Disease Mapping, and Others}.
\newblock Chapman and Hall--CRC, Boca Raton.

\bibitem[Csisz{\'a}r and Tusn{\'a}dy, 1984]{csiszartusnady}
Csisz{\'a}r, I. and Tusn{\'a}dy, G. (1984).
\newblock Information geometry and alternating minimization procedures.
\newblock {\em Statist. Decisions}, (suppl. 1):205--237.
\newblock Recent results in estimation theory and related topics.

\bibitem[Dempster et~al., 1977]{dlr}
Dempster, A., Laird, N., and Rubin, D. (1977).
\newblock Maximum-likelihood from incomplete data via the {EM} algorithm (with
  discussion).
\newblock {\em J. Roy. Statist. Soc. Ser. B}, 39(1):1--38.

\bibitem[Dunson and Park, 2007]{dunsonpark2007}
Dunson, D.~B. and Park, J.-H. (2007).
\newblock Kernel stick-breaking processes.
\newblock {\em Biometrika}, 95:307--323.

\bibitem[Eggermont and LaRiccia, 1995]{eggermont1995}
Eggermont, P. P.~B. and LaRiccia, V.~N. (1995).
\newblock Maximum smoothed likelihood density estimation for inverse problems.
\newblock {\em Ann. Statist.}, 23(1):199--220.

\bibitem[Eggermont and LaRiccia, 1997]{eggermont1997}
Eggermont, P. P.~B. and LaRiccia, V.~N. (1997).
\newblock Nonlinearly smoothed {EM} density estimation with automated smoothing
  parameter selection for nonparametric deconvolution problems.
\newblock {\em J. Amer. Statist. Assoc.}, 92(440):1451--1458.

\bibitem[Escobar and West, 1995]{escobar.west.1995}
Escobar, M.~D. and West, M. (1995).
\newblock Bayesian density estimation and inference using mixtures.
\newblock {\em J. Amer. Statist. Assoc.}, 90(430):577--588.

\bibitem[Ferguson, 1973]{ferguson1973}
Ferguson, T.~S. (1973).
\newblock A {B}ayesian analysis of some nonparametric problems.
\newblock {\em Ann. Statist.}, 1:209--230.

\bibitem[Green, 1990]{green1990}
Green, P.~J. (1990).
\newblock On use of the {EM} algorithm for penalized likelihood estimation.
\newblock {\em J. Roy. Statist. Soc. Ser. B}, 52(3):443--452.

\bibitem[James et~al., 2001]{james}
James, L.~F., Priebe, C.~E., and Marchette, D.~J. (2001).
\newblock Consistent estimation of mixture complexity.
\newblock {\em Ann. Statist.}, 29(5):1281--1296.

\bibitem[Jamshidian and Jennrich, 1997]{jamshidian.jennrich.1997}
Jamshidian, M. and Jennrich, R.~I. (1997).
\newblock Acceleration of the {EM} algorithm by using quasi-{N}ewton methods.
\newblock {\em J. Roy. Statist. Soc. Ser. B}, 59(3):569--587.

\bibitem[Kalli et~al., 2011]{kalli.griffin.walker.2011}
Kalli, M., Griffin, J.~E., and Walker, S.~G. (2011).
\newblock Slice sampling mixture models.
\newblock {\em Stat. Comput.}, 21(1):93--105.

\bibitem[Koenker and Mizera, 2014]{koenker.mizera.2014}
Koenker, R. and Mizera, I. (2014).
\newblock Convex {O}ptimization, {S}hape {C}onstraints, {C}ompound {D}ecisions,
  and {E}mpirical {B}ayes {R}ules.
\newblock {\em J. Amer. Statist. Assoc.}, 109(506):674--685.

\bibitem[Laird and Louis, 1991]{laird.louis.1991}
Laird, N.~M. and Louis, T.~A. (1991).
\newblock Smoothing the nonparametric estimate of a prior distribution by
  roughening: a computational study.
\newblock {\em Comput. Statist. Data Anal.}, 12(1):27--37.

\bibitem[Leroux, 1992]{leroux}
Leroux, B.~G. (1992).
\newblock Consistent estimation of a mixing distribution.
\newblock {\em Ann. Statist.}, 20(3):1350--1360.

\bibitem[Lindsay, 1983]{lindsay1983}
Lindsay, B.~G. (1983).
\newblock The geometry of mixture likelihoods: a general theory.
\newblock {\em Ann. Statist.}, 11(1):86--94.

\bibitem[Lindsay, 1995]{lindsay1995}
Lindsay, B.~G. (1995).
\newblock {\em Mixture Models: Theory, Geometry and Applications}.
\newblock IMS, Haywood, CA.

\bibitem[Liu et~al., 2009]{liulevinezhu2009}
Liu, L., Levine, M., and Zhu, Y. (2009).
\newblock A functional {EM} algorithm for mixing density estimation via
  nonparametric penalized likelihood maximization.
\newblock {\em J. Comput. Graph. Statist.}, 18(2):481--504.

\bibitem[MacEachern, 1998]{maceachern1998}
MacEachern, S.~N. (1998).
\newblock Computational methods for mixture of {D}irichlet process models.
\newblock In Dey, D., M{\"u}ller, P., and Sinha, D., editors, {\em Practical
  Nonparametric and Semiparametric {B}ayesian Statistics}, volume 133 of {\em
  Lecture Notes in Statist.}, pages 23--43. Springer, New York.

\bibitem[Martin and Ghosh, 2008]{martinghosh}
Martin, R. and Ghosh, J.~K. (2008).
\newblock Stochastic approximation and {N}ewton's estimate of a mixing
  distribution.
\newblock {\em Statist. Sci.}, 23(3):365--382.

\bibitem[Martin and Tokdar, 2009]{mt-rate}
Martin, R. and Tokdar, S.~T. (2009).
\newblock Asymptotic properties of predictive recursion: robustness and rate of
  convergence.
\newblock {\em Electron. J. Stat.}, 3:1455--1472.

\bibitem[Meng and van Dyk, 1997]{meng.vandyk.1997}
Meng, X.-L. and van Dyk, D. (1997).
\newblock The {EM} algorithm---an old folk-song sung to a fast new tune.
\newblock {\em J. Roy. Statist. Soc. Ser. B}, 59(3):511--567.
\newblock With discussion and a reply by the authors.

\bibitem[Miller and Harrison, 2017]{miller2016mixture}
Miller, J. and Harrison, M. (2017). 
\newblock Mixture models with a prior on the number of components. 
\newblock {\em J. Amer. Statist. Assoc.}, to appear.
\newblock {\tt arXiv:1502.06241}

\bibitem[Newton, 2002]{newton02}
Newton, M.~A. (2002).
\newblock On a nonparametric recursive estimator of the mixing distribution.
\newblock {\em Sankhy\=a Ser. A}, 64(2):306--322.

\bibitem[Nguyen, 2013]{nguyen2013}
Nguyen, X. (2013).
\newblock Convergence of latent mixing measures in finite and infinite mixture
  models.
\newblock {\em Ann. Statist.}, 41(1):370--400.

\bibitem[Redner and Walker, 1984]{rednerwalker}
Redner, R.~A. and Walker, H.~F. (1984).
\newblock Mixture densities, maximum likelihood and the {EM} algorithm.
\newblock {\em SIAM Rev.}, 26(2):195--239.

\bibitem[Richardson and Green, 1997]{richardson}
Richardson, S. and Green, P.~J. (1997).
\newblock On {B}ayesian analysis of mixtures with an unknown number of
  components.
\newblock {\em J. Roy. Statist. Soc. Ser. B}, 59(4):731--792.

\bibitem[Roeder and Wasserman, 1997]{roederwasserman}
Roeder, K. and Wasserman, L. (1997).
\newblock Practical {B}ayesian density estimation using mixtures of normals.
\newblock {\em J. Amer. Statist. Assoc.}, 92(439):894--902.

\bibitem[Rousseau and Mengerson, 2011]{rousseau2011asymptotic}
Rousseau, J. and Mengerson, K. (2011).
\newblock Asymptotic behaviour of the posterior distribution in overfitted mixture models.
\newblock {\em J. Roy. Statist. Soc. Ser. B}, 73(5):689--710.

\bibitem[Sethuraman, 1994]{sethuraman1994}
Sethuraman, J. (1994).
\newblock A constructive definition of {D}irichlet priors.
\newblock {\em Statist. Sinica}, 4(2):639--650.

\bibitem[Silverman et~al., 1990]{silverman1990}
Silverman, B.~W., Jones, M.~C., Nychka, D.~W., and Wilson, J.~D. (1990).
\newblock A smoothed {EM} approach to indirect estimation problems, with
  particular reference to stereology and emission tomography.
\newblock {\em J. Roy. Statist. Soc. Ser. B}, 52(2):271--324.

\bibitem[Tokdar et~al., 2009]{tmg}
Tokdar, S.~T., Martin, R., and Ghosh, J.~K. (2009).
\newblock Consistency of a recursive estimate of mixing distributions.
\newblock {\em Ann. Statist.}, 37(5A):2502--2522.

\bibitem[Vardi and Lee, 1993]{vardi.lee.1993}
Vardi, Y. and Lee, D. (1993).
\newblock From image deblurring to optimal investments: maximum likelihood
  solutions for positive linear inverse problems.
\newblock {\em J. Roy. Statist. Soc. Ser. B}, 55(3):569--612.
\newblock With discussion.

\bibitem[Vardi et~al., 1985]{vardi.shepp.kaufman.1985}
Vardi, Y., Shepp, L.~A., and Kaufman, L. (1985).
\newblock A statistical model for positron emission tomography.
\newblock {\em J. Amer. Statist. Assoc.}, 80(389):8--37.
\newblock With discussion.

\bibitem[Walker, 2007]{walker2007.slice}
Walker, S.~G. (2007).
\newblock Sampling the {D}irichlet mixture model with slices.
\newblock {\em Comm. Statist. Simulation Comput.}, 36(1-3):45--54.

\bibitem[Wang, 2007]{wang}
Wang, Y. (2007).
\newblock On fast computation of the non-parametric maximum likelihood estimate
  of a mixing distribution.
\newblock {\em J. R. Stat. Soc. Ser. B}, 69(2):185--198.

\bibitem[Woo and Sriram, 2006]{woosriram2006}
Woo, M.-J. and Sriram, T.~N. (2006).
\newblock Robust estimation of mixture complexity.
\newblock {\em J. Amer. Statist. Assoc.}, 101(476):1475--1486.

\end{thebibliography}
}{

}

\end{document}